\def\textsubscript#1%
\def\cdd{\mbox{\boldmath$\cdot$}~}
\renewcommand{\fnum@table}{{\bfseries Table \thetable}}
\shorttitle{DEGENERACY OF SMALL ZERO-ONE NETWORKS}
\newcommand{\defword}{\textit}
\theoremstyle{plain}
\theoremstyle{definition}
\newtheorem{newalgorithm}[theorem]{Algorithm}
\theoremstyle{remark}
\begin{document}
\thispagestyle{first}

\title{Degeneracy of Two-Dimensional Zero-one Reaction Networks with Up to Three Species}
{\uppercase{Tang} Xiaoxian \cdd \uppercase{Wang} Yihan \cdd \uppercase{Zhang} Jiandong}
{\uppercase{Tang} Xiaoxian\\
School of Mathematical Sciences, Beihang University, Beijing $100191$, China. Email: xiaoxian@buaa.edu.cn.\\
\uppercase{Wang} Yihan \cdd \uppercase{Zhang} Jiandong ({\bf Corresponding author})\\
School of Mathematical Sciences, Beihang University, Beijing $100191$, China. Email: 22091013wyh@buaa.edu.cn; zhangjiandong@buaa.edu.cn.}
{{$^\diamond${\it This paper was recommended for publication by Editor . }}}

\drd{DOI: }{Received: x x 20xx}{ / Revised: x x 20xx}

\Abstract{Zero-one biochemical reaction networks are widely recognized for their importance in analyzing signal transduction and cellular decision-making processes. 
Degenerate networks reveal non-standard  behaviors and mark the boundary where classical methods fail. Their analysis is key to understanding exceptional dynamical phenomena in biochemical systems. Therefore, we focus on investigating the degeneracy  of zero-one reaction networks. It is known that one-dimensional zero-one networks cannot degenerate. In this work,  we identify all degenerate two-dimensional zero-one reaction networks with up to three species by an efficient algorithm. By analyzing the structure of these networks, we arrive at the following conclusion: if a two-dimensional zero-one reaction network with three species is degenerate, then its steady-state system is equivalent to a binomial system.}
\Keywords{Chemical reaction network, Degeneracy, Mass-action kinetics, Steady state, Zero-one network}

\section{Introduction}\label{sec1}

For the dynamical systems that arise from biochemical reaction networks, we ask the following basic question. 
{
\begin{question}\label{question}
Which reaction network is degenerate?
\end{question}
}
First, we explain degeneracy by the following network:
\begin{align*}
&{X_1+X_3 \xrightarrow{\kappa_1} 0},
&&{X_2 \xrightarrow{\kappa_2}X_1+ X_3},
&&{X_1  + X_3 \xrightarrow{\kappa_3} X_1 + X_2 + X_3}.
\end{align*}
Let $x_i$ denote the concentration of species $X_i$ $(i\in \{{1,2,3}\})$. Under the mass-action assumption, the time evolution of $x_i$ is governed by the following ordinary differential equation (ODE) system
\begin{align*}
\left\{
\begin{aligned}
    \dot{x}_1 &= -\kappa_1x_1x_3 + \kappa_2x_2, \\
    \dot{x}_2 &= -\kappa_2x_2+ \kappa_3 x_1x_3 , \\
    \dot{x}_3 & = -\kappa_1x_1x_3 + \kappa_2x_2 .
\end{aligned}
\right.
\end{align*}
 
The stoichiometric matrix is
\[
\mathcal{N} =\left( \begin{array}{rrr}
-1 & 1 & 0 \\
0 & -1 & 1 \\
-1 & 1 & 0
\end{array}
\right),
\]
which has rank $2$, so this network is two-dimensional. The conservation law is \(x_1 - x_3 = c\), where $c\in {\mathbb R}$. So,
the augmented steady-state system \(h\) (with conservation law) is
\begin{align*}
\left\{
\begin{aligned}
    h_1 &= -\kappa_1x_1x_3 + \kappa_2x_2, \\
    h_2 &= -\kappa_2x_2+ \kappa_3 x_1x_3, \\
    h_3 &= x_1 - x_3 - c. \\
\end{aligned}
\right.
\end{align*}
 By $h_1=h_2=0$,  we obtain the equations 
$$x_2/x_1x_3 = \kappa_1/\kappa_2= \kappa_3/\kappa_2.$$ 
So, if $\kappa_1 = \kappa_3$, the system admits infinitely many positive steady states satisfying 
$x_2/x_1x_3 = \kappa_1/\kappa_2$ together with $x_1 - x_3 = c$.  And if $\kappa_1\neq \kappa_3$, then there is no positive steady states. 
Note that for the Jacobian matrix of $h$ with respect to $x$, we have 
\[
\det(\operatorname{Jac}_h) =
\begin{array}{|rrr|}
-\kappa_1x_3 & \kappa_2 & -\kappa_1 x_1 \\
\kappa_3x_3 & -\kappa_2 & \kappa_3 x_1 \\
1 & 0 & -1  \\
\end{array}
=\kappa_2(\kappa_3-\kappa_1)(x_1+x_3).
\]
Hence, we observe that 
 $\det(\operatorname{Jac}_h)$ is identically zero at any positive steady state.

Thus, the network admits only degenerate positive steady states. In this case, we say this network is degenerate.
Reversely, a network is said to be nondegenerate if it admits at least one nondegenerate positive steady state. See the formal definitions in Section \ref{sec:back}.

Studying the degeneracy or the nondegeneracy of reaction networks is essential for understanding diverse dynamical behaviors in biochemical systems. Nondegeneracy, along with other key properties such as multistability, Hopf bifurcations, and absolute concentration robustness (ACR), underlies switching behavior, oscillations, and cellular decision-making in signaling systems \cite{ conradi2019existence, 2208.04196, BanajiBoros, Bihan2020, Mueller2016, Puente2025}.
While nondegenerate networks exhibit well-behaved steady state structures that are robust under parameter perturbations, degenerate networks can display unexpected geometric features, such as positive steady state sets of higher dimension or singularities in the Jacobian (see related discussion in \cite[Theorem 3.1]{Feliu2024}). These phenomena may reflect critical or pathological regimes in biological systems. Moreover, degenerate networks help characterize boundary cases where standard tools from algebraic geometry or dynamical systems theory break down. For these reasons, we ask the following questions: (a) how to efficiently determine whether a given network is degenerate; (b) what special structural properties degenerate networks possess.

Since studying large biochemical reaction networks is challenging, {some recent works have attempted to address this difficulty by exploiting structural properties of large biological systems, such as sparsity, chordal graphs, and block triangular structure \cite{CCW, MJ}. In} this work, we study small reaction networks, motivated by the observation that many important dynamical behaviors, such as multistability \cite{BanajiPantea2018, JoshiShiu2013}, oscillations \cite{Banaji2018}, and local bifurcations \cite{BanajiBorosHofbauer2023}, can be inherited from large networks to  smaller subnetworks. In recent studies, considerable effort has been devoted to identifying the minimal networks within broad classes that can exhibit these complex dynamical features.
 Earlier, a series of studies characterized small networks capable of exhibiting bistability by performing computations on specific classes of networks with fixed numbers of species and reactions (e.g., \cite{Ramakrishnan2008,Wilhelm2009}). 
Recently, Tang and Xu \cite{TangXu2021} classified all minimal multistable reaction networks with two reactions, identifying exactly which small networks, under constraints on species and reactants, are capable of multistability. Banaji and Boros \cite{BanajiBoros} recently classified all minimal at-most-bimolecular networks that can exhibit Hopf bifurcations, showing that such networks necessarily consist of three species and four reactions. Tang and Wang \cite{2208.04196} identified the smallest zero-one networks capable of Hopf bifurcations as four-dimensional systems with four species and five reactions. Moreover, Kaihnsa, Nguyen, and Shiu \cite{KaihnsaNguyenShiu2024} established that any at-most-bimolecular network exhibiting both multistationarity and absolute concentration robustness (ACR) must have at least three species, three reactions, and a dimension of at least two. Jiao and Tang \cite{JiaoTang2025} developed an efficient algorithm for determining the equivalence of small zero-one reaction networks based on their steady-state ideals, enabling the classification of millions of networks while avoiding expensive Gröbner basis computations.

When the stoichiometric coefficients are limited to zero or one, the network is called a zero-one network.

For example, reactions such as $0 \to X$ and $X \to 0$ naturally appear in biochemical models: $0 \to X$ can represent the input of a substance (e.g., glucose) from an external source into a cell, while $X \to 0$ can represent the degradation of $X$ or its loss to the external environment.

Our interest in these networks stems from their prevalence in cell signaling, where many key biochemical systems exhibit this structure. Examples include the two-layer MAPK cascade \cite{Dickenstein2023, Zumsande2010}, hybrid histidine kinase systems \cite{JaniakSpens2005, Kothamachu2015}, and the ERK network \cite{Georgiev2006, Baudier2018}. 
A more comprehensive list of such networks from signaling pathways is provided in \cite[Figure 2]{TelekFeliu2023}, which presents eleven representative zero-one models.

In this work, we focus on the degeneracy of small zero-one networks. It is known that a one-dimensional zero-one network 
either admits no positive steady states, or admits a unique nondegenerate positive steady state \cite[Theorem 2]{JTZ}. By a known  result \cite[Theorem 3]{JTZ}, any two-dimensional zero-one network involving no more than three species is incapable of supporting nondegenerate multistationarity, and if it has multiple positive steady states, they are necessarily degenerate. Here, we provide an efficient algorithm 
(Algorithm \ref{algorithm}) for determining whether a given reaction network admits only degenerate positive steady states (i.e., whether the network is degenerate). 
The correctness of the algorithm 
is based on the proof of Theorem \ref{thm:alg}. 
The efficiency of our method stems from a transformation of the Jacobian matrix based on extreme rays of the flux cone, which avoids direct symbolic representations of steady states and greatly reduces computational costs. The core of the algorithm involves checking whether the principal  minors of the partial transformed Jacobian matrix $A(\lambda)$
are the zero polynomial, which corresponds to the Jacobian matrix of the system augmented with the conservation laws failing to have full rank at all positive steady states. Our main contribution is applying the algorithm to all two-dimensional three-species zero-one reaction networks
and successfully identifying 3152 degenerate networks among more than a million candidates. The computational results summarized in Table \ref{tab:degenerate} show that most degenerate networks have certain types of conservation laws, where one species only depends on at most one of the other two species. The computational timings recorded in Table \ref{tab:timing-alg1} support the efficiency of Algorithm \ref{algorithm}. A key new finding beyond \cite{JTZ} is that, for all the 3152 degenerate networks, their steady-state systems are  equivalent to a binomial system. This empirical pattern is formalized in Theorem \ref{eq:thm1}.  
   Our computation also confirms that 
    all these degenerate networks admit infinitely many positive steady states, which implies that if a two-dimensional zero-one network with three species admits only one positive steady state, then it is nondegenerate, and these results are consistent with \cite{JTZ}. 

    Binomial systems are interesting in the history of studying reaction networks by algebraic methods; see \cite{PerezMillan2012, Sadeghimanesh2019, Rahkooy2021}.  In the study of biochemical reaction networks, networks with binomial or toric steady states occupy a central position due to their distinct algebraic structures and dynamical properties. First, the clear algebraic form of such networks facilitates computational analysis. Second, they serve as powerful hubs for analyzing complex dynamics. For many intricate networks involving intermediates, such as phosphorylation cycles, global dynamical properties such as multistationarity can essentially be determined and explained by a reduced binomial core networks. More importantly, from a biological perspective, binomial networks are not exceptional cases. Mathematical models of many key biological modules are found to naturally possess or approximate toric steady states, confirming the prevalence of these structures in real systems. Our discovery that all degenerate networks are binomial systems provides novel evidence for this universal and profound network architecture.

The rest of this paper is organized as follows. In Section \ref{sec:back}, we review the standard concepts in reaction networks, including the definitions of zero-one networks, steady states and degeneracy. In Section \ref{sec3},  we discuss the transformed Jacobian matrix and its equivalence to the original Jacobian matrix at steady states, providing the theoretical foundation for our degeneracy criterion in Theorem \ref{thm:alg}.
In Section \ref{secfour}, we formally propose Algorithm \ref{algorithm} for detecting degeneracy, along with Algorithm \ref{algorithm2} for enumerating maximum networks and Algorithm \ref{algorithm3} for preprocessing.
In Section \ref{sec4},  we implement these algorithms to obtain all degenerate two-dimensional three-species zero-one reaction networks, and we analyze the computational results.  
In Section \ref{sec6}, we present our main results, establishing Theorem \ref{eq:thm1} concerning the structural features of these networks based on computational analysis.
 Finally, in Section \ref{sec7}, we discuss the broader implications of our results and suggest directions for future research.

\section{Background}\label{sec:back}

A \defword{reaction network} $G$, or simply a \defword{network}, consists of $s$ species $\{X_1, X_2, \dots, X_s\}$ and $m$ reactions of the form 
\begin{align}\label{eq:network}
\mu_{1j}X_1 + \mu_{2j}X_2 +
 \dots +
\mu_{sj}X_s
~ \xrightarrow{\kappa_j} ~
\nu_{1j}X_1 + \nu_{2j}X_2 +
 \dots +
\nu_{sj}X_s,
 \quad
    \text{for } j=1,2, \ldots, m,
\end{align}
where all \defword{stoichiometric coefficients} $\mu_{ij}$ and $\nu_{ij}$ are nonnegative integers, and we assume that $(\mu_{1j},\mu_{2j},\ldots,\mu_{sj}) \ne (\nu_{1j},\nu_{2j},\ldots,\nu_{sj})$ for each reaction. Each $\kappa_j \in \mathbb{R}_{>0}$ denotes the \defword{rate constant} of the $j$-th reaction in \eqref{eq:network}. A reaction is called a \defword{zero-one reaction} if all its stoichiometric coefficients are either 0 or 1. A \defword{zero-one network} consists solely of zero-one reactions. For each reaction, we define the \defword{stoichiometric vector} as
\begin{align}\label{eq:vector}
\Delta_j := (\nu_{1j} - \mu_{1j}, \nu_{2j} - \mu_{2j}, \ldots, \nu_{sj} - \mu_{sj})^\top.
\end{align}
The \defword{stoichiometric matrix} $\mathcal{N}$ of $G$ is an $s \times m$ matrix, where the $(i, j)$-entry of $\mathcal{N}$ is defined as $\nu_{ij} - \mu_{ij}$. The \defword{reactant matrix} $\mathcal{X}$ of $G$ is an $s \times m$ matrix, where the $(i, j)$-entry of $\mathcal{X}$ is defined as $\mu_{ij}$. The real linear space spanned by the column vectors $\Delta_1, \Delta_2,\ldots, \Delta_m$ of $\mathcal{N}$ defines the \defword{stoichiometric subspace}, denoted by $S$.

Let $x_1, x_2,\ldots, x_s$ denote the concentrations of species $X_1,X_2, \ldots, X_s$. Under the mass-action assumption, their time evolution is governed by the following ODE system
\begin{align}\label{eq:sys}
\dot{x} = f(\kappa, x) := \mathcal{N}v(\kappa, x),
\end{align}
where $x := (x_1, x_2, \ldots, x_s)^\top$, and $v(\kappa, x) := (v_1(\kappa, x),v_2(\kappa, x), \dots, v_m(\kappa, x))^\top$ with
\begin{align}\label{eq:v}
v_j(\kappa, x) := \kappa_j \prod_{i=1}^s x_i^{\mu_{ij}}. 
\end{align}
Treating $\kappa := (\kappa_1,\kappa_2, \ldots, \kappa_m)^\top$ as a  vector of parameters, we have  $f_{i}(\kappa,x) \in \mathbb{Q}(\kappa)[x]$, for $i\in\{1,2,\dots, s\}$. 

Let $d := s - \text{rank}(\mathcal{N})$. A \defword{conservation-law matrix} $W$ is a $d \times s$ matrix with row-reduced echelon form whose rows form a basis for the orthogonal complement $S^\perp$ of the stoichiometric subspace. Note that $\text{rank}(W) = d$, and system \eqref{eq:sys} satisfies $W \dot{x} = \mathbf{0}$, where $\mathbf{0}$ denotes the vector whose coordinates are all zero. Thus, any solution $x(t)$ with nonnegative initial condition $x(0) \in \mathbb{R}_{\ge 0}^s$ remains within the \defword{stoichiometric compatibility class}
\begin{align}\label{eq:pc}
\mathcal{P}_c := \{x \in \mathbb{R}_{\geq 0}^s \mid Wx = c\}, \quad c := Wx(0) \in \mathbb{R}^d.
\end{align}
The \defword{positive stoichiometric compatibility class} is the relative interior of $\mathcal{P}_c$
\begin{align*}
\mathcal{P}_c^+ := \{x \in \mathbb{R}_{> 0}^s \mid Wx = c\} = \mathcal{P}_c \cap \mathbb{R}_{>0}^s.
\end{align*}
Let $I$ denote the set $\{i_1, i_2, \dots, i_d\}$, where \(i_j\) is the column index of the first nonzero entry in the \(i\)-th row of \(W\), and we let $i_1 < i_2 < \cdots < i_d$. Define
\begin{equation}
h_i := 
\begin{cases}
f_i & \text{if } i \notin I, \\
(Wx - c)_k & \text{if } i = i_k \in I,
\end{cases} 
\label{h_system}
\end{equation}
where $f_1, f_2,\dots, f_s$ are defined in \eqref{eq:sys}.  
Then we define
\begin{align}\label{eq:h}
h := (h_1, h_2,\ldots, h_s),
\end{align}
and we refer to system \eqref{eq:h} as the \defword{steady-state system augmented by conservation laws}.

Given a rate-constant vector $\kappa^* \in \mathbb{R}_{>0}^m$, a \defword{steady state} of system~\eqref{eq:sys} refers to a concentration vector $x^* \in \mathbb{R}_{\geq 0}^s$ such that  $f(\kappa^*, x^*) = \mathbf{0}$, where $f(\kappa, x)$ represents the right-hand side of the ODE system~\eqref{eq:sys}. If every entry of $x^*$ is strictly positive, that is $x^* \in \mathbb{R}_{> 0}^s$, then $x^*$ is referred to as a \defword{positive steady state}. We say a steady state \( x^* \) is \defword{degenerate} if the Jacobian matrix \( \text{Jac}_f(\kappa^*, x^*) \) restricted to the \defword{stoichiometric subspace} \( S \) is not surjective, i.e.,  $
\text{im}\left(\text{Jac}_f(\kappa^*, x^*)|_S \right) \neq S.$
For any $\kappa^* \in \mathbb{R}_{>0}^m$ and for any $c^* \in \mathbb{R}^d$, a solution $x^* \in \mathbb{R}_{\ge 0}^s$ of $h = \mathbf{0}$ is said to be \defword{a steady state  in ${\mathcal P}_{c^*}$}. Notice that a steady state $x^*$ is \defword{degenerate} if and only if the Jacobian matrix $\text{Jac}_h(\kappa^*, x^*)$ does not have full rank \cite[Definition 9.9]{ConradiPantea_multistationarity}. We remark that if $s=\text{rank}(\mathcal{N})$, then according to the definition of $h$ \eqref{h_system}, we have $h=f$ (i.e., there are no conservation laws), and $x^*$ is degenerate if and only if $\det(\text{Jac}_f(\kappa^*, x^*))=0$. However, in most cases, we have $s>\text{rank}(\mathcal{N})$. Then, the rows of $\mathcal{N}$ are linearly dependent, and by \eqref{eq:sys}, we know that $\det(\text{Jac}_f)$ is identically zero. Hence, in this case, one cannot conclude degeneracy by checking $\det(\text{Jac}_f)$.  If a network $G$ admits only degenerate positive steady states, then it is called a \defword{degenerate network}. If a network $G$ admits at least one nondegenerate positive steady state, then it is called a \defword{nondegenerate network}.  Here, we define the degeneracy/nondegeneracy of a network following from a recent study \cite{Feliu2024}.  Notice that if a network is not degenerate, then it is nondegenerate.

For the convenience of subsequent algorithmic and classification descriptions, we introduce several useful concepts.
Let $G$ be an $r$-dimensional zero-one reaction network with $s$ species.
The network $G$ is called a \defword{maximum $s$-species network} if adding any additional zero-one reaction involving only the species $X_1, X_2,\dots, X_s$ strictly increases the dimension of the network to $r+1$ \cite[Definition~2]{JTZ}.
Two networks are said to \textit{have the same form} if one can be obtained from the other by relabeling the species $X_1, X_2,\ldots, X_s$ or the reactions $\mathcal{R}_1, \mathcal{R}_2,\ldots, \mathcal{R}_m$ \cite[Definition~1]{JiaoTang2025}.
A \textit{subnetwork} of $G$ is a network obtained by selecting a subset of reactions from $G$ that has the same dimension as $G$; in particular, $G$ itself is considered a subnetwork \cite[Definition~2.1]{joshi2017small}.
A species is called \defword{trivial} if, in every reaction, it either does not appear or appears on both sides of the reaction \cite[Section~2.1]{BBH2024}.
Two reaction networks are said to be \defword{equivalent} if one can be obtained from the other by removing all trivial species.  We remark that it is common that one species appears on both sides of a certain reaction (e.g., an enzyme), but usually such a species is not necessarily trivial in most well-known biochemical reaction networks such as the ERK network. For instance, the species $X_1$ in the network
$$X_1\xrightarrow{{\kappa_1}} X_1 + X_2, \;\; 0\xrightarrow{{\kappa_2}}  X_3$$
is trivial, while it is not trivial in the network
$$X_1\xrightarrow{{\kappa_1}} X_1 + X_2, \;\; X_1\xrightarrow{{\kappa_2}}  X_3.$$






\section{Methods}\label{sec3}

In this section,  we  first recall a transformation of the Jacobian matrix based on extreme rays. In Lemma \ref{lemma2}, we present the relationship between  the Jacobian matrices before and after transformation. Then, we present an algorithm (Algorithm \ref{algorithm}) to determine whether a network is degenerate, and we prove the correctness of the algorithm in Theorem \ref{thm:alg}. 

For a  reaction network \( G \) with \( s \) species and \( m \) reactions, let \( \mathcal{N} \in \mathbb{R}^{s \times m} \) be the stoichiometric matrix, and let \( \mathcal{X} \in \mathbb{R}^{s \times m} \) be the reactant matrix.  As a standard approach  (see, e.g., \cite{conradi2019existence}), the Jacobian matrix \( \operatorname{Jac}_f(\kappa, x) \in \mathbb{R}^{s \times s} \) associated with  \(  f(\kappa, x) \) defined  in \eqref{eq:sys} can be written as 
\begin{align}\label{eq:jacf}
\operatorname{Jac}_f(\kappa, x) = \mathcal{N} \operatorname{diag}\big(v(\kappa, x)\big) \mathcal{X}^\top \operatorname{diag}(p),
\end{align}
where
\begin{itemize}
    \item[(i)] $ \operatorname{diag}\big(v(\kappa, x)\big) $ is an $ m \times m $ diagonal matrix with $ v(\kappa, x) $ defined in \eqref{eq:v} on its diagonal, 
    
    \item[(ii)] $ p := (p_1,p_2, \ldots, p_s)^\top = \left( \frac{1}{x_1}, \frac{1}{x_2},\ldots, \frac{1}{x_s} \right)^\top $, and $ \operatorname{diag}(p) $ is an $ s \times s $ diagonal matrix with entries $ p_i $ on its diagonal.
\end{itemize}
Next, we analyze the Jacobian matrix $\text{Jac}_f(\kappa,x)$ under a coordinate transformation evaluated at positive steady states. 
For the stoichiometric matrix $\mathcal{N} \in \mathbb{R}^{s \times m}$, the corresponding \defword{flux cone} is defined by 
\begin{align}\label{eq:Fn}
\mathcal{F}(\mathcal{N}) := \left\{ \alpha \in \mathbb{R}_{\geq 0}^m \mid \mathcal{N} \alpha = \mathbf{0} \right\},
\end{align}
which is a convex polyhedral cone generated by a basis, i.e., a finite set of extreme rays (see, e.g., \cite{2208.04196}). By representing any \defword{flux vector} \(\alpha \in \mathcal{F}(\mathcal{N})\) as a nonnegative combination of these extreme rays, we can reparameterize the Jacobian matrix in terms of convex parameters \((\lambda, p)\) as follows.
Let $l^{(1)},l^{(2)},\ldots,l^{(t)}\in\mathbb{R}_{\geq0}^m$ be a set of extreme rays of $\mathcal{F}(\mathcal{N})$. Then, any $\alpha\in \mathcal{F}(\mathcal{N})$ can be written as
\begin{align}\label{eq:alpha}
\alpha = \sum_{i=1}^{t} \lambda_i l^{(i)}, \quad \text{with } \lambda_i \geq 0 \ \text{for any } i \in \{1, 2,\ldots, t\},
\end{align}
and we define $ \lambda := (\lambda_1, \lambda_2,\ldots, \lambda_t)^\top $. 
The partial transformed Jacobian matrix in terms of $\lambda$ is defined as

\begin{equation} \label{equation11}
    A(\lambda) := \mathcal{N} \operatorname{diag}\left(\sum_{i=1}^t \lambda_i l^{(i)}\right) \mathcal{X}^\top.
\end{equation}

The transformed Jacobian matrix in terms of $(p, \lambda)$ is defined as
\begin{align}\label{eq:jhp}
J(p, \lambda) := \mathcal{N} \ \text{diag}\left( \sum_{i=1}^{t} \lambda_i l^{(i)} \right) \mathcal{X}^{\top} \ \text{diag}(p).
\end{align}

For all $I \subseteq \{1,2, \ldots, s\}$, we have
\begin{equation} \label{eq:3.13}
    \det\big(J(p, \lambda)[I, I]\big) = \det\big(A(\lambda)[I, I]\big) \cdot \prod_{j \in I} p_j.
\end{equation}

\begin{lemma}
\cite[Lemmas 4.1 and 4.3]{2208.04196}
\label{lemma2}
Let \( G \) be a network as defined in \eqref{eq:network}, and let \( f \) denote the steady-state system given by \eqref{eq:sys}. 
    Suppose \( J(p, \lambda) \in \mathbb{Q}[p, \lambda]^{s \times s} \) is the matrix defined in \eqref{eq:jhp}. 
    Then, for any \( \kappa \in \mathbb{R}_{>0}^m \) and for any associated positive steady state \( x \in \mathbb{R}_{>0}^s \), 
    there exist  \( p \in \mathbb{R}_{>0}^s \) and \( \lambda \in \mathbb{R}_{\geq0}^t \) satisfying 
    \[
         \operatorname{Jac}_f(\kappa, x)=J(p, \lambda),
    \]
    and for any  \( p \in \mathbb{R}_{>0}^s \) and \( \lambda \in \mathbb{R}_{\geq0}^t \),
    there exist  \( \kappa \in \mathbb{R}_{>0}^m \) and  associated positive steady state \( x \in \mathbb{R}_{>0}^s \) satisfying 
    \[
        J(p, \lambda)= \operatorname{Jac}_f(\kappa, x).
    \]
\end{lemma}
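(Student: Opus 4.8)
The plan is to read off both claims directly from the explicit factorization \eqref{eq:jacf} of the Jacobian, using only the definition of a positive steady state of \eqref{eq:sys} and the generating property \eqref{eq:alpha} of the flux cone. The key observation is that \eqref{eq:jacf} and \eqref{eq:jhp} are structurally identical: \eqref{eq:jacf} reads $\operatorname{Jac}_f(\kappa,x)=\mathcal{N}\operatorname{diag}(v(\kappa,x))\mathcal{X}^\top\operatorname{diag}(p)$ with $p=(1/x_1,\dots,1/x_s)^\top$, while \eqref{eq:jhp} reads $J(p,\lambda)=\mathcal{N}\operatorname{diag}(\sum_{i=1}^t\lambda_i l^{(i)})\mathcal{X}^\top\operatorname{diag}(p)$. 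So the whole lemma reduces to two matching problems: matching $p$ with the reciprocal-coordinate vector of $x$, and matching the mass-action flux vector $v(\kappa,x)$ with the conic combination $\sum_{i=1}^t\lambda_i l^{(i)}$.

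For the forward direction, given $\kappa\in\mathbb{R}_{>0}^m$ and an associated positive steady state $x\in\mathbb{R}_{>0}^s$, I would put $p:=(1/x_1,\dots,1/x_s)^\top\in\mathbb{R}_{>0}^s$. Since $x$ is a steady state, $\mathcal{N}v(\kappa,x)=f(\kappa,x)=\mathbf{0}$, and since $\kappa\in\mathbb{R}_{>0}^m$ and $x\in\mathbb{R}_{>0}^s$, each coordinate $v_j(\kappa,x)=\kappa_j\prod_{i=1}^s x_i^{\mu_{ij}}$ is strictly positive; hence $v(\kappa,x)\in\mathcal{F}(\mathcal{N})$. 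By \eqref{eq:alpha} there is some $\lambda\in\mathbb{R}_{\geq 0}^t$ with $v(\kappa,x)=\sum_{i=1}^t\lambda_i l^{(i)}$, and substituting this $p$ and $\lambda$ into \eqref{eq:jhp} and comparing with \eqref{eq:jacf} gives $J(p,\lambda)=\operatorname{Jac}_f(\kappa,x)$.

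For the reverse direction, given $p\in\mathbb{R}_{>0}^s$ and $\lambda\in\mathbb{R}_{\geq 0}^t$, I would put $x:=(1/p_1,\dots,1/p_s)^\top\in\mathbb{R}_{>0}^s$ and $\alpha:=\sum_{i=1}^t\lambda_i l^{(i)}$, so $\mathcal{N}\alpha=\mathbf{0}$. In the main case, where every coordinate $\alpha_j$ is strictly positive, the choice $\kappa_j:=\alpha_j/\prod_{i=1}^s x_i^{\mu_{ij}}\in\mathbb{R}_{>0}$ forces $v(\kappa,x)=\alpha$; then $f(\kappa,x)=\mathcal{N}\alpha=\mathbf{0}$, so $x$ is a positive steady state associated with $\kappa$, and \eqref{eq:jacf} gives $\operatorname{Jac}_f(\kappa,x)=\mathcal{N}\operatorname{diag}(\alpha)\mathcal{X}^\top\operatorname{diag}(p)=J(p,\lambda)$.

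The step I expect to be the main obstacle is the remaining case of the reverse direction, where $\alpha=\sum_{i=1}^t\lambda_i l^{(i)}$ lies on a face of the flux cone and has a vanishing coordinate $\alpha_j=0$: then no mass-action flux vector $v(\kappa,x)$ with $\kappa\in\mathbb{R}_{>0}^m$ can equal $\alpha$ exactly, because each $v_j(\kappa,x)$ is positive, so the naive construction above breaks down. I would deal with this either by a limiting argument — choosing $\lambda^{(n)}\to\lambda$ in $\mathbb{R}_{\geq 0}^t$ with $\sum_{i=1}^t\lambda_i^{(n)} l^{(i)}\in\mathbb{R}_{>0}^m$, applying the main case to each $\lambda^{(n)}$, and passing to the limit using the continuous (indeed polynomial) dependence of both sides on the data — or, more cheaply, by observing that the only use of this lemma is to detect whether $\det\operatorname{Jac}_f$ vanishes at every positive steady state, equivalently whether $\det J(p,\lambda)\equiv 0$ on $\mathbb{R}_{>0}^s\times\mathbb{R}_{\geq 0}^t$, and for that it suffices to have the identity $\operatorname{Jac}_f(\kappa,x)=J(p,\lambda)$ on a Zariski-dense subset, which the main case already supplies. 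Everything else is routine bookkeeping with \eqref{eq:jacf}.
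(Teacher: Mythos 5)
The paper itself gives no proof of this lemma: it is quoted verbatim from \cite{2208.04196} (Lemmas 4.1 and 4.3), so there is no in-paper argument to compare against. Your forward direction is complete and is the expected argument: at a positive steady state $v(\kappa,x)\in\mathcal F(\mathcal N)\cap\mathbb{R}^m_{>0}$, hence a conic combination $\sum_i\lambda_i l^{(i)}$, and with $p=(1/x_1,\dots,1/x_s)^\top$ formulas \eqref{eq:jacf} and \eqref{eq:jhp} coincide. Your reverse direction is likewise correct in the main case $\alpha:=\sum_i\lambda_i l^{(i)}\in\mathbb{R}^m_{>0}$, via $\kappa_j:=\alpha_j/\prod_i x_i^{\mu_{ij}}$.

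The boundary case you flag is a genuine issue, and you should be aware that your first proposed fix cannot close it: a limiting argument only shows that $J(p,\lambda)$ is a \emph{limit} of Jacobians at positive steady states, not that it equals one, and in fact the statement as transcribed (arbitrary $\lambda\in\mathbb{R}^t_{\ge 0}$) fails in general. For the network $0\xrightarrow{\kappa_1}X_1$, $X_1\xrightarrow{\kappa_2}0$ one has $J(p,0)=0$, while $\operatorname{Jac}_f(\kappa,x)=-\kappa_2\neq 0$ at every positive steady state, so no choice of $\kappa$ and $x$ realizes $\lambda=0$. The cited source imposes precisely the restriction that the flux vector be strictly positive, i.e.\ your main case. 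Your second, ``cheaper'' observation is the correct repair and is all that Theorem \ref{thm:alg} needs: (i) every Jacobian at a positive steady state is of the form $J(p,\lambda)$ (your forward direction), and (ii) if $B\not\equiv 0$ then, provided the network admits positive steady states, the set $\{(p,\lambda): p\in\mathbb{R}^s_{>0},\ \lambda\in\mathbb{R}^t_{\ge 0},\ \alpha(\lambda)\in\mathbb{R}^m_{>0}\}$ has nonempty interior, so a nonzero polynomial $B$ cannot vanish on it, and the main case then produces a nondegenerate positive steady state. With the positivity restriction on $\alpha$ stated explicitly, your argument is complete and coincides with the route taken in the original reference.
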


\begin{lemma}\cite[Proposition 5.3]{WiufFeliu_powerlaw}\label{lemma3}
Let \( M \in \mathbb{R}^{s \times s} \) be a real matrix, and let \( r \leq s \) be an integer.  
Suppose \( F \subseteq \mathbb{R}^{{s}}\) is an \( r \)-dimensional vector space that contains the space generated by the columns of \( M \).  
Let \( F^\perp \) be the orthogonal complement of \( F \), and its dimension is \( d:=s - r \). 
Let a new matrix \( \widetilde{M} \in \mathbb{R}^{s \times s} \) whose first \( d \) rows are \( \omega_1, \omega_2,\dots, \omega_d \), and whose last \( r \) rows are the corresponding last \( r \) rows of \( M \).
Then:
\[
\det(\widetilde{M}) = \sum_{\substack{I \subseteq \{1,2, \dots, s\} \\ |I| = r}} \det(M[I,I]),
\]
where \( M[I,I] \) denotes the \( r \times r \) principal submatrix of \( M \) with row and column indices in \( I \).
Moreover, if \( \operatorname{rank}(M) < r \), then both sides of the equation are zero.

\end{lemma}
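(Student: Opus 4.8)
The plan is to reduce $\det(\widetilde M)$ to a single Cauchy--Binet computation, using the hypothesis that the columns of $M$ lie in $F$. Write $\Omega\in\mathbb{R}^{d\times s}$ for the matrix with rows $\omega_1,\dots,\omega_d$. Since $F^\perp$ is spanned by the $\omega_i$ and every column of $M$ lies in $F$, the containment hypothesis is equivalent to $\Omega M=0$. I would partition $\{1,\dots,s\}$ into $D=\{1,\dots,d\}$ and $R=\{d+1,\dots,s\}$, write $\Omega=(\Omega_D\mid\Omega_R)$ (column blocks) and $M=\begin{pmatrix}M_D\\ M_R\end{pmatrix}$ (row blocks), where $M_R$ denotes the last $r$ rows of $M$, and further $M_R=(M_{RD}\mid M_{RR})$. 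The reduced-basis condition makes $\Omega_D$ upper triangular with all diagonal entries equal to $1$, so $\Omega_D$ is invertible and $\det(\Omega_D)=1$; this is the only place that structure is used.

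Next I would record the two consequences of $\Omega M=0$ that do the work. Putting $B:=\Omega_D^{-1}\Omega_R\in\mathbb{R}^{d\times r}$, the relation $\Omega_DM_D+\Omega_RM_R=0$ gives, on the one hand, $M_D=-BM_R$, and hence the factorization $M=PQ$ with $P:=\begin{pmatrix}-B\\ I_r\end{pmatrix}\in\mathbb{R}^{s\times r}$ and $Q:=M_R\in\mathbb{R}^{r\times s}$; and, on the other hand, $\Omega_R=\Omega_DB$.

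I would then show that each side of the claimed identity equals $\det(M_{RR}-M_{RD}B)$. For the left side, substituting $\Omega_R=\Omega_DB$ yields $\widetilde M=\begin{pmatrix}\Omega_D&0\\ 0&I_r\end{pmatrix}\begin{pmatrix}I_d&B\\ M_{RD}&M_{RR}\end{pmatrix}$, so $\det(\widetilde M)=\det(\Omega_D)\cdot\det(M_{RR}-M_{RD}B)=\det(M_{RR}-M_{RD}B)$ by the Schur-complement formula relative to the invertible block $I_d$. For the right side, for each $r$-subset $I$ the principal submatrix $M_{I,I}$ is the product $P_IQ_I$ of the $r\times r$ matrices $P_I$ (rows of $P$ on $I$) and $Q_I$ (columns of $Q$ on $I$), so $\det(M_{I,I})=\det(P_I)\det(Q_I)$; summing over all such $I$ and recognizing the sum as the Cauchy--Binet expansion of the $r\times r$ matrix $QP$ gives $\sum_{|I|=r}\det(M_{I,I})=\det(QP)$, and $QP=M_R\begin{pmatrix}-B\\ I_r\end{pmatrix}=M_{RR}-M_{RD}B$. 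Comparing the two computations proves the identity. The rank-deficient claim then follows immediately: if $\operatorname{rank}(M)<r$ then $\operatorname{rank}(Q)=\operatorname{rank}(M_R)\le\operatorname{rank}(M)<r$, so $\det(QP)=0$ and $\det(Q_I)=0$ for every $I$, and both sides are zero (one may also simply note that the last $r$ rows of $\widetilde M$ are then linearly dependent).

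I do not anticipate a real obstacle: the proof is one block factorization, one Schur complement, and one application of Cauchy--Binet. The only thing requiring care is bookkeeping — verifying that $\Omega M=0$ is exactly the encoding of the column-span hypothesis, and using the reduced-basis assumption precisely where it is needed, so that the scalar $\det(\Omega_D)$ is $1$ and no spurious factor appears in the final identity.
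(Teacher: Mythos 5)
Your proof is correct; note that there is nothing in the paper to compare it against, since the paper states this lemma as a quotation of \cite[Proposition 5.3]{WiufFeliu_powerlaw} (up to the typo $F\subseteq\mathbb{R}^n$, which should read $F\subseteq\mathbb{R}^s$) and gives no proof of its own. Your self-contained derivation checks out at every step: the column-span hypothesis is indeed equivalent to $\Omega M=0$ because the rows of $\Omega$ form a basis of $F^\perp$; the reduced-basis normalization makes $\Omega_D$ unit upper triangular, so $\det(\Omega_D)=1$, and you correctly isolate this as the only place that hypothesis is used; $\Omega_D M_D+\Omega_R M_R=0$ gives $M_D=-BM_R$ and hence the rank-$\le r$ factorization $M=PQ$ with $Q=M_R$; the block factorization of $\widetilde{M}$ together with the Schur complement over the block $I_d$ yields $\det(\widetilde{M})=\det(M_{RR}-M_{RD}B)$; and since $(PQ)_{I,I}=P_{I,\cdot}Q_{\cdot,I}$, Cauchy--Binet identifies $\sum_{|I|=r}\det(M_{I,I})$ with $\det(QP)=\det(M_{RR}-M_{RD}B)$, which closes the identity. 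The ``moreover'' clause is also handled properly: if $\operatorname{rank}(M)<r$ then every $r\times r$ minor of $M$ vanishes, and $\operatorname{rank}(Q)\le\operatorname{rank}(M)<r$ forces $\det(QP)=0$, so both sides are zero. This factorization-plus-Cauchy--Binet route is a clean way to establish the cited fact, and it transparently shows why the sum may be taken over all principal $r\times r$ minors when the pivot positions of the reduced basis are $1,\dots,d$, exactly as the lemma is stated here.
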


\begin{theorem}\label{thm:alg}

Let $G$ be an $r$-dimensional reaction network with $s$ species, and let $A(\lambda)$ be the partial transformed Jacobian matrix defined in \eqref{equation11} above. Then the network $G$ is \defword{degenerate} if and only if
$
\det\bigl(A(\lambda)[I, I]\bigr) \text{ is the zero polynomial for all }  I \subseteq \{1,2,\dots,s\} \text{ with } |I| = r.
$

\end{theorem}
\begin{proof}
Let $f$ be the steady-state system defined as in \eqref{eq:sys}. Let $h$ be the \defword{steady-state system augmented by conservation laws}, as defined in \eqref{eq:h}. Let \( J(p, \lambda) \in \mathbb{Q}[p, \lambda]^{s \times s} \) be the transformed Jacobian matrix  as defined in \eqref{eq:jhp}. 
Since $G$ is an $r$-dimensional network with $s$ species, it follows from Lemma~\ref{lemma3} that
\begin{align}\label{9}
\det(\operatorname{Jac}_h) = \sum_{\substack{I \subseteq \{1,2, \ldots, s\} \\ |I| = r}}  \det(\operatorname{Jac}_f[I, I]).
\end{align}

We define the polynomial 
$$
B(p, \lambda) := \sum_{\substack{I \subseteq \{1,2, \ldots, s\} \\ |I| = r}} \det\left( J(p, \lambda)[I, I] \right).
$$
Substituting the above equation \eqref{eq:3.13} yields:
\[
B(p, \lambda) = \sum_{\substack{I \subseteq \{1,2,\dots,s\} \\ |I| = r}} \Bigl( \det\bigl(A[I, I]\bigr) \cdot \prod_{j \in I} p_j \Bigr).
\]
Observe that in the expression above, the factors $\prod_{j \in I} p_j$ are independent positive monomials in the variables $p$. Therefore, the polynomial $B(p, \lambda)$ is identically zero if and only if 
$
\det\bigl(A(\lambda)[I, I]\bigr) \text{ is the zero polynomial for all }  I \subseteq \{1,2,\dots,s\} \text{ with } |I| = r.
$

By equation \eqref{9} and by Lemma \ref{lemma2}, for any $\kappa \in \mathbb{R}_{>0}^m$ and any positive steady state $x \in \mathbb{R}_{>0}^s$, there exist \( p \in \mathbb{R}_{>0}^s \) and \( \lambda \in \mathbb{R}_{\geq0}^t \) such that $\det(\text{Jac}_h(\kappa, x)) = B(p, \lambda)$. Since \( B(p, \lambda) \equiv 0 \), it follows that $\det(\text{Jac}_h(\kappa, x))=0$. Therefore, the network $G$ admits only degenerate positive steady states. According to the definition of  \defword{degenerate network}, the network \(G\) is degenerate. On the other hand,  when ${\mathcal F}(\mathcal{N})\neq \emptyset$, 
if $B(p, \lambda)$ is not the zero polynomial, then there always exist $p\in \mathbb{R}_{>0}^s$ and $\lambda \in \mathbb{R}_{\geq0}^t$ such that $B(p, \lambda)\neq 0$, and so, by equation \eqref{9} and by Lemma \ref{lemma2}, the network admits at least one nondegenerate positive steady state. Therefore, the network $G$ is nondegenerate.
\end{proof}

%

\section{Algorithms}\label{secfour}

Algorithm~\ref{algorithm} determines whether a  reaction network $G$, which admits  at least one positive steady state,  is degenerate. The algorithm proceeds as follows.

\begin{newalgorithm}
\textit{DetermineDegeneracy}
\label{algorithm}

\begin{description}[leftmargin=!, labelwidth=\widthof{\bf Step 6.}, align=left]
  \item[\bf Input:] The stoichiometric matrix $\mathcal{N} \in \mathbb{Z}^{s \times m}$ and the reactant matrix $\mathcal{X} \in \mathbb{Z}^{s \times m}$ for an $r$-dimensional zero-one reaction network with $s$ species and $m$ reactions.
  
  \item[\bf Output:] If the input network is degenerate, return \texttt{True}. If it is nondegenerate, return \texttt{False}. 
  

  \item[\bf Step 1.] Compute the extreme rays $l^{(1)}, l^{(2)},\ldots, l^{(t)} \in \mathbb{R}_{\ge 0}^m$ of the flux cone ${\mathcal F}({\mathcal N})$.

  \item[\bf Step 2.] Define
\[
\alpha = \sum_{i=1}^t \lambda_i \, l^{(i)}, \quad \text{where } \lambda_i \geq 0 \text{ for all } i \in \{1,2, \ldots, t\}.
\]

  \item[\bf Step 3.] Construct the partial transformed Jacobian matrix
\[
A(\lambda) := \mathcal{N} \cdot \operatorname{diag}(\alpha) \cdot \mathcal{X}^\top.
\]

  \item[\bf Step 4.] Check the following condition:
\[
\bigwedge_{\substack{I \subseteq \{1,2, \ldots, s\} \\ |I| = r}} \Big( \det\!\bigl(A(\lambda)[I, I]\bigr) \equiv 0 \Big).
\]
 If this condition holds, then the network is degenerate, return \texttt{True}. Otherwise, return \texttt{False}.

\end{description}

\end{newalgorithm}

\begin{remark}
We will later present a dedicated preprocessing procedure, Algorithm~\ref{algorithm3}, which checks whether a given network admits any positive steady state. Therefore, when we invoke Algorithm~\ref{algorithm}, we assume by default that every input network admits at least one positive steady state, which means in Step 1, we can indeed obtain the extreme rays. 
\end{remark}

\begin{remark}
The core of Algorithm~\ref{algorithm} is computing the determinants of the $r \times r$ principal minors of $A(\lambda)$. Since this involves only basic linear-algebraic operations on matrices with polynomial entries, the algorithm is efficient and avoids more expensive symbolic computations.
\end{remark}

\begin{example}
This example illustrates how Algorithm 1 works. All computational steps are provided below, and we now elaborate on them in detail. Consider the following network:
\[
0 \xrightarrow{\kappa_1} X_1 + X_2,\quad 
0 \xrightarrow{\kappa_2} X_3,\quad 
X_1 + X_2 + X_3 \xrightarrow{\kappa_3} 0.
\]
\noindent \textbf{Step 1.} Compute extreme rays of the flux cone \(\mathcal{F}(\mathcal{N})\).
 Notice that the stoichiometric matrix is
\[
\mathcal{N} = \begin{pmatrix}
1 & 0 & -1 \\
1 & 0 & -1 \\
0 & 1 & -1
\end{pmatrix}.
\]We look for \(\alpha = (\alpha_1,\alpha_2,\alpha_3)^\top \in \mathbb{R}^3_{\geq 0}\) such that \(\mathcal{N}\alpha = \mathbf{0}\), i.e., 
\[
\begin{cases}
\alpha_1 - \alpha_3 = 0, \\
\alpha_1 - \alpha_3 = 0, \\
\alpha_2 - \alpha_3 = 0.
\end{cases}
\]
Thus \(\alpha_1 = \alpha_2 = \alpha_3 \geq 0\). Choosing \(\alpha_3 = 1\) gives that  the flux cone is generated by a single extreme ray:
\[
l^{(1)} = (1,\;1,\;1)^\top.
\]
\noindent \textbf{Step 2.} Since the flux cone is generated by the extreme ray \(l^{(1)}\), any flux vector can be written as
\[
\alpha = \lambda_1 l^{(1)} = (\lambda_1,\;\lambda_1,\;\lambda_1)^\top, \quad \lambda_1 \geq 0.
\]

\noindent \textbf{Step 3.} Construct the  partial transformed Jacobian matrix 
\begin{align*}
A(\lambda) &= \mathcal{N} \cdot \operatorname{diag}(\alpha) \cdot \mathcal{X}^\top\\
&= \begin{pmatrix}
1 & 0 & -1 \\
1 & 0 & -1 \\
0 & 1 & -1
\end{pmatrix}
\begin{pmatrix}
\lambda_1 & 0 & 0 \\
0 & \lambda_1 & 0 \\
0 & 0 & \lambda_1
\end{pmatrix}
\begin{pmatrix}
0 & 0 & 1 \\
0 & 0 & 1 \\
0 & 0 & 1
\end{pmatrix}^\top\\
&=\begin{pmatrix} -\lambda_1 & -\lambda_1 & -\lambda_1 \\ -\lambda_1 & -\lambda_1 & -\lambda_1 \\ -\lambda_1 & -\lambda_1 & -\lambda_1 \end{pmatrix}.
\end{align*}
\noindent \textbf{Step 4.} 
For this two-dimensional network (\(r=2\)), we compute all \(2 \times 2\) principal minors of \(A(\lambda)\). Obviously, we have 
\begin{align*}
\det\!\bigl(A(\lambda)[\{1,2\},\{1,2\}]\bigr) &= \det\begin{pmatrix} -\lambda_1 & -\lambda_1 \\ -\lambda_1 & -\lambda_1 \end{pmatrix} = 0, \\
\det\!\bigl(A(\lambda)[\{1,3\},\{1,3\}]\bigr) &= \det\begin{pmatrix} -\lambda_1 & -\lambda_1 \\ -\lambda_1 & -\lambda_1 \end{pmatrix}  = 0, \\
\det\!\bigl(A(\lambda)[\{2,3\},\{2,3\}]\bigr) &= \det\begin{pmatrix} -\lambda_1 & -\lambda_1 \\ -\lambda_1 & -\lambda_1 \end{pmatrix} = 0.
\end{align*}
 Since all \(2 \times 2\) principal minors of \(A(\lambda)\) are identically zero (as polynomials in \(\lambda_1\)), the algorithm returns \texttt{True}, confirming that the network is degenerate.
\end{example}

Algorithm~\ref{algorithm2}
enumerates all maximum $r$-dimensional zero-one reaction networks with $s$ species. The algorithm proceeds as follows.

\begin{newalgorithm}
\textit{EnumeratingMaximumNetworks}
\label{algorithm2}

\begin{description}[leftmargin=!, labelwidth=\widthof{\bf Step 5.}, align=left, itemsep=0pt, parsep=0pt]
  \item[\bf Input:] The number of species $s$ and the dimension of the network $r$.
   
  \item[\bf Output:] All maximum $r$-dimensional zero-one reaction networks with $s$ species.
  
  \item[\bf Step 1.] Define the set of all possible stoichiometric vectors for zero-one reactions with $s$ species according to~\eqref{eq:vector}:
        \[
        V := \{-1,0,1\}^s \setminus \{(\underbrace{0,0,\dots,0}_{s})\} \subseteq \mathbb{R}^s,
        \]

  \item[\bf Step 2.] Enumerate all tuples of linearly independent vectors $(a_1, a_2, \dots, a_r)$ in $V^r$.

  \item[\bf Step 3.] Initialize a set \(\mathcal{M}\) to store matrices.

  \item[\bf Step 4.] For each tuple $(a_1, a_2, \dots, a_r)$ from Step 2, construct a maximal rank-$r$ matrix as follows.
\begin{enumerate}
          \item[{\bf 4.1.}] Initialize a matrix with columns $a_1$, $a_2$, \dots,  $a_r$.
          \item[{\bf 4.2.}] For each remaining vector $a \in \ V \setminus \{a_1, a_2, \dots, a_r\}$, perform
                \begin{enumerate}
                  \item[{\bf 4.2.1.}] Form a new matrix by appending $a$ as a new column to the current matrix.
                  \item[{\bf 4.2.2.}] If the rank of the new matrix is $r$, then keep $a$ as a new column; otherwise, discard $a$.
                \end{enumerate}
          \item[{\bf 4.3.}] The above steps yield a matrix $M$ whose columns include $a_1, a_2, \dots, a_r$. Add $M$ to the set $\mathcal{M}$.
        \end{enumerate}

  \item[\bf Step 5.] For each matrix in $\mathcal{M}$, convert it into the corresponding maximum networks.

  \item[\bf Step 6.] For each network obtained in Step~5, remove the networks that have the same form as it.
\end{description}
\end{newalgorithm}

\begin{remark}\label{rmk:stoi_to_reaction}
Note that a stoichiometric vector containing zero elements may correspond to more than one zero-one reaction. For instance, the vector $(1,0,1)$ corresponds to the following two reactions
\begin{align*}
0 \xrightarrow{\kappa_1} X_1 + X_3, \quad X_2 \xrightarrow{\kappa_2} X_1 + X_2 + X_3.
\end{align*}
Therefore, to obtain a maximum two-dimensional zero-one network with three-species, we must include all such reactions that yield the corresponding stoichiometric vector.
\end{remark}


Algorithm~\ref{algorithm3} enumerates all $r$-dimensional zero‑one reaction networks with $s$ species that are subnetworks of the maximum networks. Then it removes the networks that have the same form or admit no positive steady states. The algorithm proceeds as follows.

\begin{newalgorithm}
\textit{Preprocessing}
\label{algorithm3}

\begin{description}[leftmargin=!, labelwidth=\widthof{\bf Step 4.}, align=left]
  \item[\bf Input:] A maximum $r$‑dimensional zero‑one reaction network with $s$ species.
  
  \item[\bf Output:] A list of $r$-dimensional zero-one reaction networks with $s$ species such that
        \begin{itemize}[leftmargin=*, itemsep=0pt, parsep=0pt]
          \item any two networks do not have the same form,
          \item each network admits a positive steady state.
        \end{itemize}

  \item[\bf Step 1.] For each input maximum network, enumerate all its subnetworks and collect them into a set $\mathcal{L}$.

  \item[\bf Step 2.] For each network $G$ in $\mathcal{L}$, remove the networks that have the same form with $G$ as follows.
       \begin{enumerate}
          \item[{\bf 2.1.}] For the given network \(G\) with $s$ species and $m$ reactions as defined in \eqref{eq:network}, each reaction can be regarded as a binary number 
\begin{equation}\label{eq:binary}
B_j = \mu_{1j}\mu_{2j}\dots\mu_{sj}\nu_{1j}\nu_{2j}\dots\nu_{sj} \quad \text{for } j = 1, 2,\dots, m.
\end{equation}
          We write $m$ 2$s$-bit binary numbers corresponding to this network and define a set \(\mathcal{B}_0 = \{B_1, B_2, \dots, B_m\}\).
          \item[{\bf 2.2.}] For the set $\mathcal{B}_0$ obtained in Step~2.1, perform
          \begin{enumerate}
              \item[{\bf 2.2.1.}] For each $j = 1, 2, \dots, m$, apply the same permutation to every binary number $B_j$ as follows. Take a binary number $B_j$  as defined in \eqref{eq:binary} from $\mathcal{B}_0$ and split it into two blocks: $\mu_{1j}\mu_{2j}\dots\mu_{sj}$ and $\nu_{1j}\nu_{2j}\dots\nu_{sj}$. Apply the same permutation simultaneously to both blocks. That is, let $\sigma = (1,2,\dots,s)$; if $\sigma_1 = (k_1, k_2, \dots, k_s)$ is a given permutation of $\sigma$, then the binary number after applying $\sigma_1$ becomes \[\mu_{k_1j}\mu_{k_2j}\dots\mu_{k_sj}\nu_{k_1j}\nu_{k_2j}\dots\nu_{k_sj}.\] Apply the same permutation to all \(\{B_1, B_2, \dots, B_m\}\) in $\mathcal{B}_0$. This yields a new set of $m$ 2$s$-bit binary numbers denoted by ${\mathcal{B}_{\sigma_1}}$.
Without loss of generality, assume that all the possible permutations on $(1,2,\ldots,s)$ are $\sigma_1, \sigma_2, \dots, \sigma_n$. For each $\sigma_i$, do the above process. Collect all resulting sets of $m$ 2$s$-bit binary numbers into a set $\{\mathcal{B}_{\sigma_i}\}_{i=1}^n$.
\end{enumerate}

(\textit{Note:} The total number $n$ of such permutations is analyzed in Remark~\ref{r_6}.)

          \item[{\bf 2.3.}] For each network \(\widetilde{G}\) in \(L \setminus \{G\}\), check whether \(\widetilde{G}\) has the same number of reactions as $G$. If not, we skip \(\widetilde{G}\). If so, we compute the set of its \(m\) \(2s\)-bit binary numbers. If this set is the same to any set in $\{\mathcal{B}_{\sigma_i}\}_{i=1}^n$, remove the network.
                
        \end{enumerate}

  \item[\bf Step 3.] For each network remaining after Step~2, check whether it admits any positive steady states as follows.
       \begin{enumerate}
          \item[{\bf 3.1.}] For each stoichiometric matrix \(\mathcal{N}\), check the signs of its non‑zero elements.
                \begin{enumerate}
                  \item[{\bf 3.1.1.}] If any row is found where all non-zero elements are either all positive or all negative, conclude that the network admits no positive steady states and discard the network.
                  \item[{\bf 3.1.2.}] Otherwise, check whether the equation \(\mathcal{N} v = 0\) admits a positive solution. If no positive solution exists, conclude that the network admits no positive steady state and discard it.
                \end{enumerate}
        \end{enumerate}

  \item[\bf Step 4.] Output the remaining list of networks.
\end{description}
\end{newalgorithm}

\begin{remark}\label{r_6}
    In Step~2.2, consider a $2s$-bit binary number as defined in \eqref{eq:binary}. Regard the number as $s$ pairs $(\mu_{ij}, \nu_{ij})$ for $i = 1, 2,\dots, s$. Define:

\begin{itemize}
    \item $n_{00}$ as the number of $(0,0)$ pairs,
    \item $n_{01}$ as the number of $(0,1)$ pairs,
    \item $n_{10}$ as the number of $(1,0)$ pairs,
    \item $n_{11}$ as the number of $(1,1)$ pairs.
\end{itemize}

\noindent Then, after performing the same permutation on both the first
$s$ bits and the last $s$ bits simultaneously, the number of distinct binary numbers that can be obtained is given by:

\[
\frac{s!}{n_{00}! \cdot n_{01}! \cdot n_{10}! \cdot n_{11}!}.
\]
\end{remark}


\section{Experiments}\label{sec4}

In Section \ref{4.1}, we enumerate all maximum two-dimensional zero-one reaction networks with three species
by Algorithm \ref{algorithm2}
using {\tt Python} within {\tt Visual Studio Code} \cite{vscode}. Then, we classify these maximum networks. In Section \ref{4.2}, we invoke Algorithm \ref{algorithm3} to enumerate all two-dimensional zero-one reaction networks with three species, and we determine whether they exhibit degeneracy using {\tt Mathematica} \cite{mathematica}, utilizing key functions including \texttt{NullSpace} for computing the null space, \texttt{FindInstance} for verifying non-negative solutions and \texttt{Det} for evaluating Jacobian determinants. As a result, we obtain all the degenerate networks, and we put the computational results online (\url{https://github.com/zjdong-sudo/computational-results/blob/main/computational-results.txt}). 
We perform all the experiments by a 2.70 GHz Intel Core i5-11400H processor (16GB total memory) under Windows 11.

\subsection{Classifying  Maximum Networks}\label{4.1}


In this section, we carry out the following computations. 
\begin{enumerate}
\item[{\bf (First).}]
We invoke Algorithm  \textit{EnumeratingMaximumNetworks}(3, 2) to enumerate all two-dimensional maximum three-species zero-one networks. Notice that after executing Step 5 of Algorithm \ref{algorithm2}, we obtain a total of 25 maximum networks. Subsequently, after removing the networks that have the same form according to Step 6, we obtain 8 maximum networks.
\item[{\bf (Second).}]
Consider a two-dimensional three-species zero-one network $G$, the conservation law can be written as
\[
x_1 = a x_2 + b x_3 + c,
\]
where \( a,b,c \in \mathbb{R} \).
We classify the $8$ maximum  networks into three classes $G_1$, $G_2$ and $G_3$ according to \cite[Lemma 17]{JTZ} as follows
\begin{align}
G_1 &:= \{G \mid (a,b) = \left(\frac{1}{2}, \frac{1}{2}\right), G \in \mathcal{G}\}, \label{eq:g1}\\
G_2 &:= \{G \mid (a,b) \in \{(1,0), (0,1), (0,0)\}, G \in \mathcal{G}\}, \label{eq:g2}\\
G_3 &:= \mathcal{G} \setminus \{G_1 \cup G_2\}, \label{eq:g3}
\end{align}
 where $\mathcal{G}$ denotes the set of all two-dimensional maximum three-species zero-one networks. We present all the maximum networks as follows. The computational results are consistent with those shown in \cite{JTZ}.

\subparagraph*{The set $G_1$ consists of the following  network \eqref{eq:g_9}.}

\begin{align}\label{eq:g_9}   
X_1+X_2+X_3\xrightleftharpoons[\kappa_2]{\kappa_1}0, \quad X_1+X_2\xrightleftharpoons[\kappa_4]{\kappa_3}X_1+X_3,
     \quad X_2\xrightleftharpoons[\kappa_6]{\kappa_{5}}X_3
\end{align}

\subparagraph*{The set $G_2$ consists of the following  networks \eqref{eq:g_10}--\eqref{eq:g_11}.}

\begin{align}\label{eq:g_10}
&X_1+X_2+X_3\xrightleftharpoons[\kappa_2]{\kappa_1}X_1+X_3  
&&X_1+X_2\xrightleftharpoons[\kappa_4]{\kappa_3} X_1 
&&X_2+X_3\xrightleftharpoons[\kappa_{6}]{\kappa_{5}} X_3\\    &X_2\xrightleftharpoons[\kappa_{8}]{\kappa_{7}} 0 
&&X_1+X_2+X_3\xrightleftharpoons[\kappa_{10}]{\kappa_9}X_2  
&&X_1+X_3\xrightleftharpoons[\kappa_{12}]{\kappa_{11}}0 \notag\\
&X_2\xrightleftharpoons[\kappa_{14}]{\kappa_{13}} X_1+X_3 
&&X_1+X_2+X_3\xrightleftharpoons[\kappa_{16}]{\kappa_{15}}0\notag
\end{align}
\begin{align}\label{eq:g_11}
&X_1+X_2+X_3\xrightleftharpoons[\kappa_{2}]{\kappa_{1}}X_1+X_3
&&X_1+X_2\xrightleftharpoons[\kappa_4]{\kappa_3}X_1
&&X_1+X_3\xrightleftharpoons[\kappa_{6}]{\kappa_{5}}X_1 \\       &X_1+X_2+X_3\xrightleftharpoons[\kappa_{8}]{\kappa_{7}}X_1+X_2
&&X_2+X_3\xrightleftharpoons[\kappa_{10}]{\kappa_9}X_2       
&&X_3\xrightleftharpoons[\kappa_{12}]{\kappa_{11}}0 \notag \\
&X_1+X_2\xrightleftharpoons[\kappa_{14}]{\kappa_{13}}X_1+X_3
&&X_2+X_3\xrightleftharpoons[\kappa_{16}]{\kappa_{15}}X_3
&&X_2\xrightleftharpoons[\kappa_{18}]{\kappa_{17}}X_3\notag\\   
&X_2\xrightleftharpoons[\kappa_{20}]{\kappa_{19}}0
&&X_2+X_3\xrightleftharpoons[\kappa_{22}]{\kappa_{21}}0
&&X_1+X_2+X_3\xrightleftharpoons[\kappa_{24}]{\kappa_{23}}X_1\notag
\end{align}
\subparagraph*{The set $G_3$ consists of the following networks \eqref{eq:g_12}--\eqref{eq:g_16}.}

\begin{align}
&X_1+X_2\xrightleftharpoons[\kappa_2]{\kappa_1}X_1+X_3
&&X_1+X_3\xrightleftharpoons[\kappa_{4}]{\kappa_3}X_2+X_3
&&X_1+X_2\xrightleftharpoons[\kappa_{6}]{\kappa_{5}}X_2+X_3\label{eq:g_12}\\
&X_2\xrightleftharpoons[\kappa_8]{\kappa_7}X_3
&&X_1\xrightleftharpoons[\kappa_{10}]{\kappa_9}X_2 
&&X_1\xrightleftharpoons[\kappa_{12}]{\kappa_{11}}X_3\notag
\end{align}
\begin{align}\label{eq:g_13} 
&X_1+X_2\xrightleftharpoons[\kappa_2]{\kappa_1} X_1+X_3        &&X_1\xrightleftharpoons[\kappa_4]{\kappa_3} X_2+X_3
&&X_2\xrightleftharpoons[\kappa_6]{\kappa_5}X_3
\end{align}
\begin{align}\label{eq:g_14} 
&X_1\xrightleftharpoons[\kappa_2]{\kappa_1}X_2+X_3
&&X_1+X_2\xrightleftharpoons[\kappa_4]{\kappa_3}X_3
&&X_1\xrightleftharpoons[\kappa_{6}]{\kappa_{5}}X_3\\
&X_1+X_2\xrightleftharpoons[\kappa_8]{\kappa_7}X_3+X_2 
&&0\xrightleftharpoons[\kappa_{10}]{\kappa_{9}}X_2
&&X_1\xrightleftharpoons[\kappa_{12}]{\kappa_{11}}X_2+X_1\notag \\
&X_3\xrightleftharpoons[\kappa_{14}]{\kappa_{13}}X_2+X_3
&&X_1+X_3\xrightleftharpoons[\kappa_{16}]{\kappa_{15}}X_2+X_1+X_3\notag
\end{align}
\begin{align}\label{eq:g_15}
&X_3\xrightleftharpoons[\kappa_2]{\kappa_1}X_1+X_2 
&&X_1\xrightleftharpoons[\kappa_4]{\kappa_3} X_2+X_3+X_1
&&0\xrightleftharpoons[\kappa_6]{\kappa_5} X_2+X_3
\end{align}
\begin{align}\label{eq:g_16}
&X_3\xrightleftharpoons[\kappa_2]{\kappa_1}X_1+X_2+X_3 
&&X_1+X_2\xrightleftharpoons[\kappa_6]{\kappa_5} X_3+X_2
&&X_1\xrightleftharpoons[\kappa_{10}]{\kappa_{9}} X_2+X_3+X_1\\ 
&0\xrightleftharpoons[\kappa_4]{\kappa_3}X_1+X_2
&&X_1\xrightleftharpoons[\kappa_8]{\kappa_7} X_3
&&0\xrightleftharpoons[\kappa_{12}]{\kappa_{11}} X_2+X_3\notag       
\end{align}
\end{enumerate}

\subsection{Determining degeneracy}\label{4.2}


In this section, we find out all the degenerate two-dimensional zero-one networks with three species as follows.
\begin{enumerate}
    \item[{\bf (First).}]
    According to \cite[Lemma 24]{JTZ}, for any two-dimensional zero-one network \( G \), let \( h \) be the steady-state system augmented by conservation laws defined as in \eqref{eq:h}. If \( G \) is a subnetwork of a  certain network in \( G_1 \), then for any \( \kappa \in \mathbb{R}_{>0}^m \) and for any corresponding positive steady state \( x \in \mathbb{R}_{>0}^3 \), we have  $\det(\operatorname{Jac}_h(\kappa, x)) > 0$. Hence, all subnetworks of any network in $G_1$ are nondegenerate. Therefore, we only need to check all subnetworks of the networks in  $G_2$ and $G_3$.
     For each maximum network in $G_2$ and $G_3$, we execute Algorithm \ref{algorithm3}.
     We present the number of networks after carrying out Step 1, Step 2 and Step 3 of Algorithm \ref{algorithm3} in Table \ref{tab:extended_table}, which shows the majority of networks are excluded after checking the natural equivalence and the consistency as the total number of networks drops from 16915383 to 823310. 

\begin{table}
\refstepcounter{table}
\begin{center}
{\small {\bf Table \thetable}\ \ The Number of Networks Admitting Positive Steady States\label{tab:extended_table}}

\vskip 1mm

{\small
\begin{tabular}{l *{7}{c}|c}
\toprule
\textbf{Maximum Networks} & 
\textbf{\eqref{eq:g_10}} & 
\textbf{\eqref{eq:g_11}} & 
\textbf{\eqref{eq:g_12}} & 
\textbf{\eqref{eq:g_13}} & 
\textbf{\eqref{eq:g_14}} & 
\textbf{\eqref{eq:g_15}} & 
\textbf{\eqref{eq:g_16}} & \textbf{Total}\\
\midrule
\textbf{After Step 1} & 65259  & 16776675  & 4050  & 45  & 65259  & 45  & 4050 & 16915383\\
\textbf{After Step 2} & 40779  & 840262  & 710 & 27 & 33108 & 45 & 2055 & 916986\\
\textbf{After Step 3} & 34831 & 757989 & 518 & 9 & 28445 & 15 & 1503 & 823310\\
\bottomrule
\end{tabular}}
\end{center}

\par
\begin{flushleft}
\textbf{Note.} The first row presents the labels of the maximum networks. The last column gives the total number of each row.
\end{flushleft}
\end{table}

    \item[{\bf (Second).}]
   For each subnetwork found in the previous step, we apply Algorithm~\ref{algorithm} to determine the degeneracy.
   The numbers of degenerate subnetworks contained in each maximum network are presented in Table \ref{tab:degenerate}, implying that only a small fraction of networks exhibit degeneracy, totaling $3152$. Notice that $3132$ of the $3152$ degenerate networks are subnetworks of the maximum networks in $G_2$, and especially, $3096$ of them are subnetworks of \eqref{eq:g_11}. Also notice that 
   only $5$ of the  
   $8$ maximum networks contain degenerate subnetworks, which are the networks \eqref{eq:g_10}--\eqref{eq:g_12}, \eqref{eq:g_14}, and \eqref{eq:g_16}. Since the maximum networks are classified as $G_2$ and $G_3$ according to the conservation laws, the computational results reflect how conservation laws might affect the degeneracy. For instance, by
   $\eqref{eq:g2}$, we know that all subnetworks from $G_2$ have one of the three conservation laws: $x_3=x_1$, $x_3=x_2$, $x_3=c$. That means for most degenerate networks, the third species only depends on at most one of the two species $x_1$ and $x_2$. In other words, if a species depends on both the other two species, then it is highly possible that it is nondegenerate.

Table~\ref{tab:timing-alg1} reports the average running time of a single execution of Algorithm~\ref{algorithm} on two-dimensional zero-one reaction networks with three species and varying numbers of reactions.
Since we focus on two-dimensional networks, a necessary condition for admitting a positive steady state is that the network must contain at least three reactions. Moreover, as the largest networks in classes \( G_2 \) and \( G_3 \) contain 24 reactions and are known from \cite[Remark 9]{JTZ} to be nondegenerate whenever they admit a positive steady state, the degenerate (or potentially degenerate) networks we need to examine lie in the range of three to twenty-three reactions. Notice that all systems contain $3$ variables with degree at most $3$ as each network has three species and is zero-one. The number of reactions reflects the number of monomials in these polynomials. 

\end{enumerate}


\begin{table}
\refstepcounter{table}
\begin{center}
{\small {\bf Table \thetable}\ \ The Number of Degenerate Networks\label{tab:degenerate}}

\vskip 1mm

{\small
\begin{tabular}{|c|c|c|c|c|c|c|c|c|}
\hline
 & $G_1$ & \multicolumn{2}{c|}{$G_2$} & \multicolumn{5}{c|}{$G_3$} \\
\hline
\textbf{Maximum Networks} & \textbf{\eqref{eq:g_9}} & \textbf{\eqref{eq:g_10}} & \textbf{\eqref{eq:g_11}} & \textbf{\eqref{eq:g_12}} & \textbf{\eqref{eq:g_13}} & \textbf{\eqref{eq:g_14}} & \textbf{\eqref{eq:g_15}} & \textbf{\eqref{eq:g_16}} \\
\hline
\textbf{Degenerate Networks} & 0 & 36 & 3096 & 1 & 0 & 18 & 0 & 1 \\
\hline
\textbf{Total Degenerate Networks} & 0 & \multicolumn{2}{c|}{3132} & \multicolumn{5}{c|}{20} \\
\hline
\end{tabular}}
\end{center}
\end{table}

\clearpage

\begin{table}
\refstepcounter{table}
\begin{center}
{\small {\bf Table \thetable}\ \ Average Running Time of Algorithm~\ref{algorithm} for a Single Network\label{tab:timing-alg1}}

\vskip 1mm

{\small
\begin{tabular*}{\textwidth}{@{\extracolsep{\fill}}cccc}
\toprule
\multicolumn{2}{c}{\textbf{Number of reactions $\leq 13$}} &
\multicolumn{2}{c}{\textbf{Number of reactions $\geq 14$}} \\
\cmidrule(lr){1-2} \cmidrule(lr){3-4}
\textbf{Reactions} & \textbf{Time (s)} & \textbf{Reactions} & \textbf{Time (s)} \\
\midrule
3  & 0.0103 & 14 & 0.1052 \\
4  & 0.0106 & 15 & 0.3281 \\
5  & 0.0109 & 16 & 1.1540 \\
6  & 0.0116 & 17 & 4.3376 \\
7  & 0.0118 & 18 & 16.7916 \\
8  & 0.0121 & 19 & 72.7282 \\
9  & 0.0126 & 20 & 330.922 \\
10 & 0.0138 & 21 & 1382.46 \\
11 & 0.0174 & 22 & 5803.28 \\
12 & 0.0242 & 23 & 23960.3 \\
13 & 0.0441 & \multicolumn{1}{c}{} & \multicolumn{1}{c}{} \\
\bottomrule
\end{tabular*}}
\end{center}

\par
\begin{flushleft}
\textbf{Note.} For networks with fewer than $13$ reactions, the running time remains below $0.03$ seconds and increases only mildly. However, when the number of reactions exceeds $14$, the computational cost reaches several seconds and eventually increases by several orders of magnitude.
\end{flushleft}
\end{table}




\section{Main Results}\label{sec6}
By using {\tt Mathematica} to check the system augmented with conservation laws $h$ for each degenerate network, we find that the corresponding steady-state system is equivalent to a binomial system, as shown in Theorem \ref{eq:thm1}.
\begin{theorem}\label{eq:thm1}
    For any degenerate two-dimensional zero-one reaction network with three species, after removing all trivial species, each polynomial $f_i$ in the system augmented with conservation laws $h$ defined in \eqref{h_system} consists of two monomials with respect to $x$. 
\end{theorem}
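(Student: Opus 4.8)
The plan is to derive Theorem~\ref{eq:thm1} from the classification of maximum networks in Section~\ref{4.1} together with a finite case check, using the fact that both the property ``degenerate'' and the property ``the $f_i$-part of $h$ is binomial'' are preserved by the identifications that organize the enumeration in Section~\ref{4.2}.

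First I would reduce the scope. By \cite[Lemma 24]{JTZ} every subnetwork of the network~\eqref{eq:g_9} in $G_1$ is nondegenerate, so any degenerate two-dimensional three-species zero-one network has the same form (Definition~\ref{def:the same form networks 1}) as a subnetwork of one of the seven maximum networks \eqref{eq:g_10}--\eqref{eq:g_16} lying in $G_2\cup G_3$. Hence the theorem concerns exactly the $3152$ networks identified in Section~\ref{4.2}, distributed over those seven families as in Table~\ref{tab:degenerate}.

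Next I would argue that the conclusion is invariant under the operations that generate this list, so that it suffices to check it on a small set of representatives. Relabeling species or reactions only permutes the coordinates of $h$ and the monomials inside each $f_i$, leaving monomial counts unchanged. Removing a redundant species $X_k$ deletes the row of $\mathcal{N}$ indexed by $k$, which is identically zero (since $X_k$ enters every reaction it participates in with equal exponents on both sides), hence does not change $\operatorname{rank}(\mathcal N)$; the reduced network is still two-dimensional, with conservation-law defect $d\in\{0,1\}$, and deleting $X_k$ drops one variable factor from each $v_j$. This could in principle collapse two monomials of some surviving $f_i$ into one, but that is impossible here, since a one-monomial $f_i$ is nonzero on the positive orthant and would leave the reduced network without any positive steady state, which is excluded. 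Thus one only needs to list one representative of each degenerate network modulo ``same form'' and ``removal of redundant species'', which I expect collapses the $3152$ networks -- in particular the large family of $3096$ descending from~\eqref{eq:g_11} -- to a short explicit list.

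For each representative the remaining work is mechanical: form $\mathcal N$ and $\mathcal X$, write $f$ as in~\eqref{eq:sys}, pick a row-reduced conservation-law matrix $W$, assemble $h$ as in~\eqref{h_system}, and verify that every component $h_i$ with $i\notin I$ (that is, every $f_i$ appearing inside $h$) is a sum of exactly two monomials in $x$. The main obstacle I anticipate is the bookkeeping in the reduction step: one must be sure the list of representatives is exhaustive, and the reduced network must be re-identified \emph{after} removing redundant species, since this can turn a subnetwork of one maximum network into a subnetwork of another. A more conceptual route -- trying to show directly that $B(p,\lambda)\equiv 0$ together with the zero-one exponent constraint forces each $f_i$ to have support of size at most two -- would avoid the enumeration, but converting the identical vanishing of the sum of $2\times 2$ principal minors of $J(p,\lambda)$ into a bound on the monomial support of $f$ looks delicate, so I would keep the case analysis as the backbone of the proof.
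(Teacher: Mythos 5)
Your proposal follows essentially the same route as the paper: the theorem is established there by the exhaustive computation of Section~\ref{4.2} (enumerating subnetworks of the maximum networks in $G_2\cup G_3$, discarding those in $G_1$ via \cite[Lemma 24]{JTZ}, running Algorithm~\ref{algorithm} to isolate the $3152$ degenerate networks, and then directly inspecting each augmented system $h$), which is exactly the finite case check you describe. Your additional observation that the binomial property is invariant under relabeling and redundant-species removal only trims the case list and does not change the nature of the argument, so the two proofs coincide in substance.
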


\begin{remark}\label{rmk:computational_observations}
    Our computational enumeration confirms Theorem~\ref{eq:thm1} and yields further observations:
    \begin{itemize}
        \item The degree of each polynomial $f_i$ is at most three, which aligns with the theoretical expectation for networks with three species.
        \item The degenerate networks range in size from three to twelve reactions. Accordingly, the corresponding polynomials $f_i$ initially contain up to twelve terms and at least three terms. After combining like terms, each polynomial reduces to exactly two terms.
    \end{itemize}
\end{remark}

\noindent To provide a more intuitive demonstration of Theorem~\ref{eq:thm1}, we present two representative degenerate networks as examples.

\begin{example}
Consider the following two-dimensional network

\begin{align*}
    0 \xrightarrow{\kappa_1} X_1 + X_2 + X_3, \quad 0 \xrightarrow{\kappa_2}X_1 + X_2,
     \quad 0\xrightarrow{\kappa_{3}}X_3, \quad X_1+X_2+X_3 \xrightarrow{\kappa_4} 0.
\end{align*}
\noindent The system $f$ is given as follows
\begin{align*}
    \left\{
    \begin{aligned}
    f_1 &= \kappa_1 + \kappa_2  -\kappa_4 x_1 x_2 x_3 \\
    f_2 &= \kappa_1 + \kappa_2  -\kappa_4 x_1 x_2 x_3 \\
    f_3 &= \kappa_1 + \kappa_3  -\kappa_4 x_1 x_2 x_3 \\
    \end{aligned}
    \right.
\end{align*}

\noindent The conservation law of this network is $x_1-x_2 = c$. Hence, we obtain the system $h$ according to \eqref{h_system} as follows
\begin{align*}
    \left\{
    \begin{aligned}
    h_1 &=  x_1 - x_2 - c\\ 
    h_2 &= f_2 = \kappa_1 + \kappa_2  -\kappa_4 x_1 x_2 x_3 \\
    h_3 &= f_3 = \kappa_1 + \kappa_3  -\kappa_4 x_1 x_2 x_3 \\
    \end{aligned}
    \right.
\end{align*}It is straightforward to check that $\det(\operatorname{Jac}_h)$ is identically zero, and hence, it is degenerate. 
It is also directly observed that each polynomial  \( f_i \) consists of a constant term and the monomial \( x_1x_2x_3 \), which is consistent with Theorem \ref{eq:thm1}. 

\end{example}

\begin{example} \label{ex:last}
Consider the following two-dimensional network

\begin{equation}\label{eq:24}
\begin{aligned}
&0 \xrightarrow{\kappa_1} X_1 + X_2, 
&&X_3 \xrightarrow{\kappa_2}X_1 + X_2+X_3, 
&& 0\xrightarrow{\kappa_{3}}X_1,\\
&X_3 \xrightarrow{\kappa_4} X_1+X_3, 
&&0 \xrightarrow{\kappa_5} X_2 , 
&&X_3 \xrightarrow{\kappa_6} X_2 + X_3,\\ 
&X_1+X_2 \xrightarrow{\kappa_7} X_1,   
&&X_1+X_2+X_3 \xrightarrow{\kappa_8} X_1  + X_3,
&&X_1+X_2\xrightarrow{\kappa_9}  X_2.
\end{aligned}
\end{equation}

\noindent The steady-state system $f$ is given as follows
\begin{align}\label{eq:ex4_f1}
    \left\{
    \begin{aligned}
    f_1 &= \kappa_1 + \kappa_2 x_3 + \kappa_3 + \kappa_4 x_3 - \kappa_9x_1x_2 \\
    f_2 &=  \kappa_1 + \kappa_2x_3 + \kappa_5 + \kappa_6x_3 - \kappa_7x_1x_2 - \kappa_8x_1x_2x_3 \\
    f_3 &= 0 \\
    \end{aligned}
    \right.
\end{align}

\noindent The conservation law of this network is $x_3= c$ where $c\in {\mathbb R}_{>0}$. Hence, we obtain the system augmented  with conservation laws $h$ according to \eqref{h_system} as follows
\begin{align}\label{eq:ex4_h1}
    \left\{
    \begin{aligned}
    h_1 &= f_1 = \kappa_1 + \kappa_2 x_3 + \kappa_3 + \kappa_4 x_3 - \kappa_9x_1x_2 \\
    h_2 &= f_2 =  \kappa_1 + \kappa_2x_3 + \kappa_5 + \kappa_6x_3 - \kappa_7x_1x_2 - \kappa_8x_1x_2x_3 \\
    h_3 &= x_3 - c \\
    \end{aligned}
    \right.
\end{align}

\noindent The Jacobian matrix of system $h$ is
\[
\operatorname{Jac}_h = 
\begin{pmatrix}
-\kappa_9 x_2 & -\kappa_9 x_1 & \kappa_2 + \kappa_4 \\[4pt]
-\kappa_7 x_2 - \kappa_8 x_2 x_3 & -\kappa_7 x_1 - \kappa_8 x_1 x_3 & \kappa_2 + \kappa_6-\kappa_8 x_1 x_2   \\[4pt]
0 & 0 & 1
\end{pmatrix}
\]
We observe that  $\det(\operatorname{Jac}_h)$ is identically zero at any steady state. Hence, the network is degenerate.

On the other hand, notice that $x_3$ is a trivial species in this network. For instance, the reactions $0 \xrightarrow{\kappa_1} X_1 + X_2$ and
$X_3 \xrightarrow{\kappa_2}X_1 + X_2+X_3$ in \eqref{eq:24}, after removing $X_3$ and renaming the rate constants, yield the reaction $0 \xrightarrow{\eta_1} X_1 + X_2$. Similarly, the reactions indexed by $\kappa_3$ and $\kappa_4$, the reactions indexed by $\kappa_5$ and $\kappa_6$, and the reactions indexed by $\kappa_7$ and $\kappa_8$ can be combined respectively. In this way, we obtain a two-species network after removing $X_3$:

\begin{equation} \label{eq:25}
\begin{aligned}
&0 \xrightarrow{\eta_1} X_1 + X_2,
&&0 \xrightarrow{\eta_2} X_1,
&&0 \xrightarrow{\eta_3} X_2,\\
&X_1 + X_2 \xrightarrow{\eta_4} X_1,
&&X_1 + X_2 \xrightarrow{\eta_5} X_2.
\end{aligned}
\end{equation}

\noindent The new steady-state system $\tilde{f}$ is given as follows
\begin{align}\label{eq:ex4_f2}
    \left\{
    \begin{aligned}
    \tilde{f}_1 &= \eta_1 + \eta_2 - \eta_5 x_1 x_2\\
    \tilde{f}_2 &= \eta_1 + \eta_3 - \eta_4 x_1 x_2 \\
    \end{aligned}
    \right.
\end{align}

\noindent Notice that there is no conservation law since the network is full-dimensional. So, there is no conservation laws any longer, and the system augmented with conservation laws $\tilde{h}$ is equal to $\tilde{f}$. 
It is easy to see that $\det(\operatorname{Jac}_{\tilde{h}})$ is identically zero. Hence, the network is degenerate. In particular, each polynomial $\tilde{f}_i$ consists of a constant term and the monomial $x_1x_2$, which verifies the conclusion of Theorem \ref{eq:thm1}. 

By this example we can see why the network \eqref{eq:24} is equivalent to \eqref{eq:25}. 
For the system $h$ \eqref{eq:ex4_h1}, by setting $h_3 = 0$ and substituting $x_3 = c$ into $h_1$ and $h_2$, we have  
\begin{align}
    \left\{
    \begin{aligned}
    & \kappa_1 + \kappa_2 c + \kappa_3 + \kappa_4 c - \kappa_9x_1x_2 \\
    & \kappa_1 + \kappa_2 c + \kappa_5 + \kappa_6 c - \kappa_7x_1x_2 - \kappa_8cx_1x_2, 
    \end{aligned}
    \right.
\end{align}
which can be considered as the same system with the system $\tilde{f}$ (and also $\tilde{h}$) in \eqref{eq:ex4_f2} by setting $\kappa_1 + \kappa_2 c + \kappa_3 + \kappa_4 c = \eta_1+\eta_2$, $\kappa_1 + \kappa_2 c + \kappa_5 + \kappa_6 c = \eta_1+\eta_3$,
$\kappa_9=\eta_5$, and $\kappa_7+\kappa_8=\eta_4$.

\end{example}


\section{Discussion}\label{sec7}
In this work, we identify all the degenerate two-dimensional zero-one networks with three species by a standard method, and by studying their systems augmented with conservation laws, we obtain Theorem~\ref{eq:thm1}.  In the future, a productive direction is to seek a mathematical proof for Theorem \ref{eq:thm1}. We believe that the proof will  theoretically explain why most degenerate networks are from the group $G_2$ and provide explicit characterizations of degenerate two-dimensional zero-one networks.
Also, it would be interesting to 
explore whether Theorem~\ref{eq:thm1} can be generalized to higher-dimensional networks.

\section*{Conflict of Interest}
The authors declare no conflict of interest.


\end{document}